\newtheorem{theorem}{\textbf{\textsc{Theorem}}}
\begin{document}
\bibliographystyle{IEEE2}

\title{Competitive Data Trading in Wireless-Powered Internet of Things (IoT) Crowdsensing Systems with Blockchain\vspace{-3mm}}

\author{
Shaohan Feng$^1$, Wenbo Wang$^1$, Dusit Niyato$^1$, Dong In Kim$^2$, and Ping Wang$^3$ \\
$^1$ School of Computer Engineering, Nanyang Technological University (NTU), Singapore	\\
$^2$ School of Information and Communication Engineering, Sungkyunkwan University (SKKU), Korea		\\
$^3$ Department of Electrical Engineering and Computer Science, York University, Canada \vspace{-5mm}	}

\maketitle

\begin{abstract}
With the explosive growth of smart IoT devices at the edge of the Internet, embedding sensors on mobile devices for massive data collection and collective environment sensing has been envisioned as a cost-effective solution for IoT applications. However, existing IoT platforms and framework rely on dedicated middleware for (semi-) centralized task dispatching, data storage and incentive provision. Consequently, they are usually expensive to deploy, have limited adaptability to diverse requirements, and face a series of data security and privacy issues. In this paper, we employ permissionless blockchains to construct a purely decentralized platform for data storage and trading in a wireless-powered IoT crowdsensing system. In the system, IoT sensors use power wirelessly transferred from RF-energy beacons for data sensing and transmission to an access point. The data is then forwarded to the blockchain for distributed ledger services, i.e., data/transaction verification, recording, and maintenance. Due to coupled interference of wireless transmission and transaction fee incurred from blockchain's distributed ledger services, rational sensors have to decide on their transmission rates to maximize individual utility. Thus, we formulate a noncooperative game model to analyze this competitive situation among the sensors. We provide the analytical condition for the existence of the Nash equilibrium as well as a series of insightful numerical results about the equilibrium strategies in the game.
\end{abstract}

\begin{IEEEkeywords}
crowdsensing, blockchain, energy harvesting, concave games
\end{IEEEkeywords}

\section{Introduction}
\label{sec:introduction}

At the dawn of 5G, the world has seen an enormous increase in the number of pervasively connected IoT devices, which are used in a plethora of scenarios such as vehicular networks, the logistics/manufacturing sectors, smart homes and e-health. With the trend of sensor miniaturization and the widespread adoption of IPv6, Cisco predicts that by 2021 an extraordinary amount of 847ZB data will be generated by IoT devices annually and about 7.2ZB will be finally stored worldwide~\cite{networking2016cisco}. Such technological development has created unprecedented opportunities of access to ubiquitous sensing data about the context in concern, e.g., smart city and urban environment, for both real-time use and big data-based analysis. However, it also imposes great challenges to network operation and data processing. Compared with the conventional Wireless Sensor Networks (WSNs), most of the IoT sensors are owned by users instead of operators. Meanwhile, the data generated by the same sensors may be consumed by different data services, which require various levels of data quality, timeliness and sampling frequency for different purposes. For this reason, conventional WSNs is limited in proliferation due to the cost of deployment and maintenance as well as the rigidness with task-specified data processing/dispatching structures.

To overcome the limitations of conventional WSNs in both network operation and data processing, a number of novel paradigms have been proposed at both the network side and the data processing side. In~\cite{8353364}, a framework of wireless-powered sensing systems was proposed to address the issues of limited temporal-spatial coverage in urban crowdsensing over wearables. Therein, the mobile operator deploys ultra-dense charging stations using energy beamforming in small cells, which only require incremental upgrade of the protocols running on existing infrastructure. Such design enables the accommodation of the massive-scale, already-in-field IoT devices for Radio Frequency (RF)-powered pervasive sensing. Power transfer is used as incentiveness for IoT devices to execute tasks of crowdsensing applications. For involved parties, this framework helps to form the basis of an energy-data market at the data collection stage. At the operator side, data processing/aggregation is usually delegated to the cloud-based backend~\cite{6069707} or semi-centralized edge servers~\cite{8272334}. Under this paradigm, interconnections between the sensor cloud and the data processing backend rely on an intermediate layer provided by the operator for handling the tasks of data mediation, such as task association, data filtering, privacy preserving and data-integrity verification~\cite{6069707}. However, with the presence of such a middleware pre-designed and fully controlled by the operator, the IoT platforms and crowdsensing framework face the same problem of lacking adaptability and high installation cost as in WSNs. Also, the centralization of data processing, storage and trading inevitably causes the security risks such as data falsification and manipulation because of a single breach.

To overcome the flaws and vulnerability caused by centralization at the data processing stage, we resort to the emerging technology of permissionless blockchains~\cite{wang2018survey} to ensure that the task assignment, the data collection, storage and trading are all performed in a decentralized but trusted manner. Then, the intermediate layer can be safely removed to enhance privacy and data security while the data integrity is still publicly verifiable. In brief, a permissionless blockchain system can be seen as a replicated database maintained by a number of pseudonymous nodes over peer-to-peer (P2P) connections. Blockchains use the public key infrastructure (PKI) mechanism and the data structure of hash linked list to ensure that the time order and the content of a data record (also known as a transaction) cannot be tampered without being noticed once it is confirmed on the chain~\cite{wang2018survey}. The blockchain relies on Byzantine fault-tolerate mechanisms, e.g., Nakamoto protocol~\cite{nakamoto2008bitcoin} or Practical Byzantine Fault Tolerance (PBFT) protocol~\cite{Castro:2002:PBF:571637.571640}, to coordinate the Byzantine agreement, i.e., peer consensus, about the state of the transaction storage among the consensus nodes. For permissionless blockchains, the messaging complexity for reaching the consensus among the P2P nodes are expected to be sufficiently low such that the size of the consensus network scales well.

In this paper, we propose a novel framework of an RF-powered IoT crowdsensing system. The IoT system under consideration involves three parties, i.e., the massive-scale IoT devices/sensors working as a sensing cloud, the wireless network operator, and the permissionless blockchain network as shown in Fig.~\ref{fig:system_model}. The sensors operate by using wireless power transferred from RF-energy beacons. The power is used for data sensing by the sensors and data transmission to the access point. The network operator provides wireless power transfer facility, i.e., RF-energy beacons, and data communication services, i.e., an access point, for the sensors. As such, the sensors are charged with a certain price by the network operator. Afterwards, the sensors send their data to the blockchain for distributed ledger functions, e.g., data verification, logging, and executing a set of actions using smart contracts in the form of programmable automata on the chain~\cite{clack2016smart}. The blockchain network adapts the scheme of Proof-of-Work (PoW)~\cite{nakamoto2008bitcoin} for Sybil attack prevention~\cite{wang2018survey}. The blockchain also charges a transaction fee for processing the data from sensors. Finally, the sensors can trade their verified and secured data in the blockchain to data consumers, gaining a certain revenue. Note that here the data trading processes are completely self-organized with no middle layer controlled by the operator. From the perspective of the sensors and the data consumers, the blockchain can be regarded as a decentralized platform as a service.

By choosing the data transmission rates, the sensors are rational and self-interested in maximizing their utility which is defined as a function of the revenue from trading their data, the price paid to the network operator and the transaction fee paid to the blockchain network. However, this leads to a competitive situation because of coupled interference in wireless communication and transaction fee determination, which are functions of the transmission rate of each sensor. To study this situation, we therefore introduce a noncooperative game model. The following key properties are guaranteed in our proposed IoT system:
\begin{itemize}
  \item The data, i.e., transaction, throughput of the blockchain scales well such that the massive data volume from the sensor cloud is handled smoothly.
  \item The rational and self-interested sensors noncooperatively decide on their own sensing/transferring data for individual utility optimization.
  \item The sensors are able to reach a system equilibrium in a self-organized manner with limited coordination among themselves.
\end{itemize}

\begin{figure}[!t]
\centering
\includegraphics[width=.45\textwidth,trim=40 315 150 340,clip]{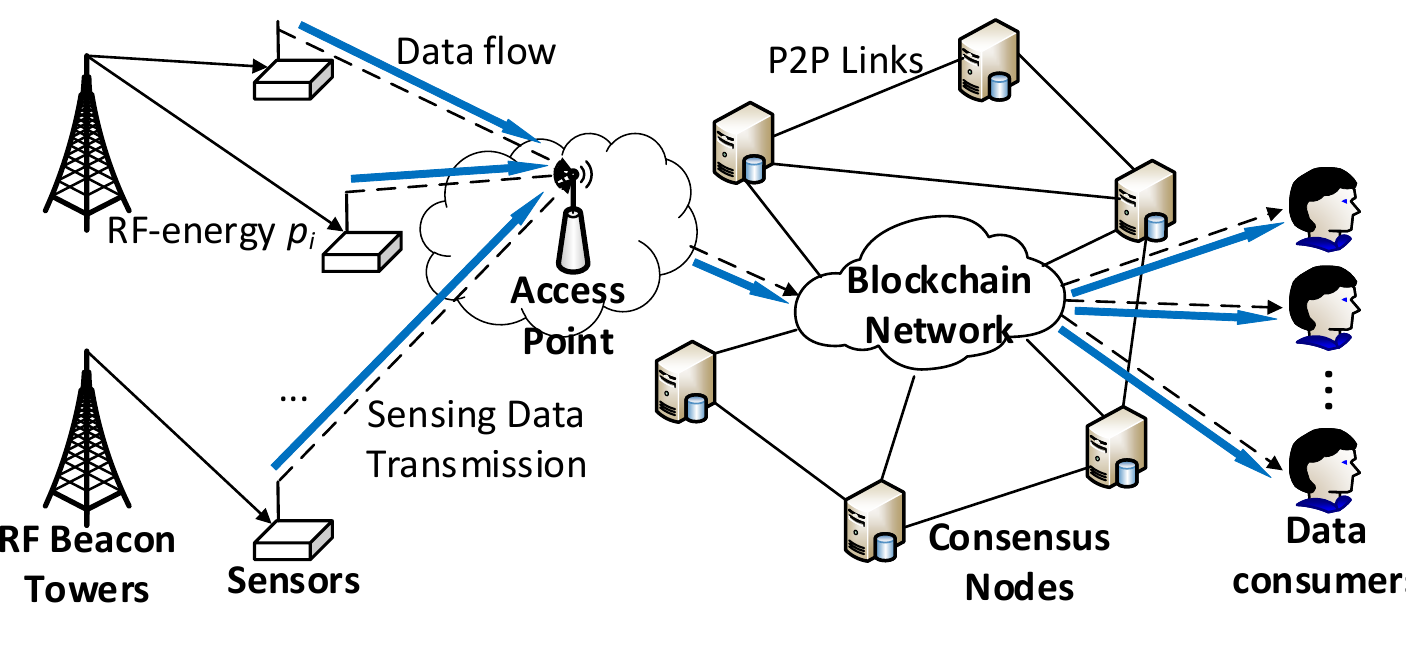}
\caption{Schematics of the RF-powered IoT crowdsensing system.}
\label{fig:system_model}
\end{figure}

\section{System Model}
\label{sec:system}

We consider an RF-powered IoT crowdsensing system as shown in Fig.~\ref{fig:system_model}. Specifically, a cloud of sensors, the set of which is denoted by ${\cal{N}}$, harvest energy from the beamforming-enabled RF-energy beacons deployed by the network operator to support data transfer from the sensors to the access point. Each sensor $i\! \in\! {\cal{N}}$ independently negotiates with the operator about the wirelessly received power level $p_i$ within the range ${\cal{D}}_{p_i}\!=\!\left[0, p^{\rm{u}}_i\right]$. Let $c_i$ be the fixed power level used by sensor $i$ for circuit maintenance and data sensing. Then, the power used by sensor $i$ to transmit data for storage and trading is $p_i - c_i$. We assume that the propagation model for data transmission of sensor $i$ follows~(\ref{eq:transmission_rate}) and consider the path loss as a function of the distance $d_i$ between sensor $i$ and the access point connected to the blockchain. Let ${\mathbf{r}}=\left[r_{i}\right]^{\top}_{i \in {\cal{N}}}$ be the vector of the transmission rates, ${\bf{p}}=\left[p_i\right]^{\top}_{i \in {\cal{N}}}$ be the vector of received powers, i.e., the power transferred from the RF-energy beacons received by a sensor. Likewise, ${\bf{p}}_{-i}$ be the vector of the received powers except sensor $i$, the transmission rate of sensor $i$ can be derived as follows $\forall i \in {\cal{N}}$:
\begin{eqnarray}\label{eq:transmission_rate}
r_i={R_i}\left( {{p_i},{{\bf{p}}_{ - i}}} \right) = {b_i}{\log _2}\left( {1 + \frac{g_i{\frac{{{p_i} - {c_i}}}{{{{\left( {{d_i}} \right)}^{{\alpha _i}}}}}}}{{\sum\limits_{j \ne i} g_j {\frac{{{p_j} - {c_j}}}{{{{\left( {{d_j}} \right)}^{{\alpha _j}}}}}}  + {\sigma ^2}}}} \right),\,
\end{eqnarray}
where $\sigma^2$ is the variance of the additive white Gaussian noise, $g_i$ is the channel gain of sensor $i$, and $b_i$ is the corresponding bandwidth. We define ${\bf{R}}\left({\bf{p}}\right)\!=\!\left[{R_i}\left( {{p_i},{{\bf{p}}_{ - i}}} \right)\right]^{\top}_{i \in {\cal{N}}}$, where ${\bf{R}}\left({\bf{p}}\right)\!:\! {\cal{D}}_{\bf{p}}\! =\!\mathop \times_{i\in {\cal{N}}} {\cal{D}}_{p_i} \mapsto {\cal{D}}_{\bf{r}}$ is a continuous closed mapping.

We assume that the data services, e.g., data storage, trading and task dispatching, are implemented on top of a permissionless blockchain. The sensing data are formated into normal transactions of fixed size. To enhance efficiency, only the digest of each transaction is stored on the chain, and the content of the transactions are stored by each consensus node off-chain. To address the issue of transaction throughput bottleneck in Bitcoin-like blockchains, we adopt the protocol of ELASTICO for sharded blockchain in~\cite{luu2016secure}, where the transaction throughput increases linearly with the computational power admitted by the blockchain. Compared with the existing protocols based on PoW (see~\cite{wang2018survey} and the references therein), the sharded blockchain employs the process of PoW puzzle solving for identity establishment and forms a number of consensus committees to handle the received transactions in disjoint subsets through standard Byzantine agreement protocols. Therefore, it is possible to scale the transaction throughput of the blockchain according to the total transmission rate of the sensors as long as the sensors are able to sustain the maintenance cost of the blockchain.

We consider that the maintenance cost of the blockchain is measured in the supplied computational power. Based on the linear relationship between the computational power and the transaction throughput, we can further express the required computational power as ${m\sum_{j \in {\cal{N}}} {{R_j}\left( {{p_j},{{\bf{p}}_{ - j}}} \right)} }$, where $m\!>\!0$ is the computational power coefficient and $\sum_{j \in {\cal{N}}} {{R_j}\left( {{p_j},{{\bf{p}}_{ - j}}} \right)}$ is the total data rate of the sensor cloud. Furthermore, it is well-known that the power consumption in modern computer architecture can be modeled as a quadratic function of the corresponding computational frequency~\cite{teodorescu2008variation}. Therefore, the power consumption of the sharded blockchain network can be derived as
\begin{equation}\label{eq:blockchain_computing_power}
p_{\mathrm{b}} \!=\!{a{{\left[{m\sum_{j \in {\cal{N}}} {{R_j}\left( {{p_j},{{\bf{p}}_{ - j}}} \right)} } \right]}^2} \!+\! b\left[ {m\sum_{j \in {\cal{N}}} {{R_j}\left( {{p_j},{{\bf{p}}_{ - j}}} \right)} } \right] \!+\! c},
\end{equation}
where $a$, $b$ and $c$ are the power consumption coefficients.

To enjoy the data service (e.g., smart contracts and transaction recording) provided by the sharded blockchain, sensor $i$ needs to pay the transaction fee to compensate the cost of the consensus nodes incurred by energy consumption. We assume that the sensors proportionally share the blockchain maintenance cost among themselves according to the volume of data that they propose to the blockchain. Then, the rate of payment by sensor~$i$ depends on its fraction of the total transmit rate as follows:
\begin{equation}\label{eq:blockchain_cost}
\begin{array}{ll}
C_i=&\frac{{R_i}\left({p_i},{{\bf{p}}_{ - i}} \right)}{\sum_{j\in {\cal{N}}}{R_j}( {p_j},{{\bf{p}}_{ - j}} )}\left\{a{{\left[ m\sum_{j \in {\cal{N}}} {{R_j}\left( {{p_j},{{\bf{p}}_{ - j}}} \right)}  \right]}^2} \right.\\
&\left.+ b\left[ m\sum_{j \in {\cal{N}}} {{R_j}( {{p_j},{{\bf{p}}_{ - j}}} )} \right] + c\right\}.
\end{array}
\end{equation}

The revenue of a sensor is a function of the volume of its sensing data recorded by the blockchain. Let $\lambda_i$ denote the price of unit bitrate of sensor $i$ and $\phi$ denote the price of unit power transferred from the RF-energy beacons. To support the wirelessly received power level $p_i$ for sensor $i$, the wirelessly transferred power level of the RF-energy beacons $i$ is $P^{\rm{t}}\left(p_i, d^{\rm{t}}_i\right)=p_i\left( d^{\rm{t}}_i\right)^\eta$ based on the Slivnyak-Mecke��s theorem~\cite{baccelli2010stochastic}, where $\eta$ is the path-loss exponent of wirelessly power transfer and $d^{\rm{t}}_i$ is the distance between the RF-energy beacons and sensor $i$. Based on (\ref{eq:transmission_rate})-(\ref{eq:blockchain_cost}), the utility of sensor $i$ can be expressed as follows:
\begin{equation}\label{eq:sensor_i}
\small{\begin{array}{ll}
{u_i}\left( {{p_i},{{\bf{p}}_{ - i}}} \right) = \lambda_i {R_i}\left( {{p_i},{{\bf{p}}_{ - i}}} \right) - \phi P^{\rm{t}}\left(p_i, d^{\rm{t}}_i\right) - \frac{{R_i}\left( {{p_i},{{\bf{p}}_{ - i}}} \right)}{\sum_{j\in {\cal{N}}}{R_j}\left( {{p_j},{{\bf{p}}_{ - j}}} \right)}\\
\times\left\{ {a{{\left[ {m\sum\limits_{j \in {\cal{N}}} {{R_j}\left( {{p_j},{{\bf{p}}_{ - j}}} \right)} } \right]}^2} \!+\! b\left[ {m\sum\limits_{j \in {\cal{N}}} {{R_j}\left( {{p_j},{{\bf{p}}_{ - j}}} \right)} } \right] \!+\! c} \right\}.
\end{array}}
\end{equation}
In (\ref{eq:sensor_i}), a larger $\lambda_i$ indicates a higher quality (hence a higher value) of the data generated by sensor $i$.

Each sensor aims to maximize its individual utility, i.e., $\max\limits_{p_i\in {\cal{D}}_{p_i}}{u_i}\left( {{p_i},{{\bf{p}}_{ - i}}} \right)$, given the interference from other sensors. Therefore, the sensors' strategy are coupled, and a noncooperative game can be formulated as a four-tuple ${\cal{G}}_{\rm{s}}=\left\{{\cal{N}}, {\bf{p}}, {\cal{D}}_{\bf{p}}, {\bf{u}}\right\}$, where
\begin{itemize}
\item ${\cal{N}}$ is the set of active sensors, i.e., the players;

\item ${\cal{D}}_{\bf{p}} \subset {\mathbb{R}}^{\left|{\cal{N}}\right|}$ is the domain of ${\bf{p}}$, i.e., strategy, and an $|\mathcal{N}|$-polyhedron;

\item ${\bf{p}}\in{\cal{D}}_{\bf{p}}$ is the sensor-determined received power vector;

\item ${\bf{u}}=\left[u_i\right]_{i \in {\cal{N}}}$ is the vector of sensors' utilities as a function of ${\bf{p}}$,  where $u_i$ is given by~(\ref{eq:sensor_i}).
\end{itemize}

Based on the game formulation, we consider the Nash equilibrium to be the solution for the sensors.


\section{Equilibrium Analysis}
\label{sec:analysis}

To ease the analysis of the Nash Equilibrium (NE) of ${\cal{G}}_{\rm{s}}$, we consider that the sensors optimize their utilities by deciding on their transmission rates instead of on their strategies ${\bf{p}}$. This is owning to the fact that the vector of the functions describing the relationship between ${\bf{r}}$ and ${\bf{p}}$, i.e., ${\bf{r}}={\bf{R}}\left({\bf{p}}\right)$, is a continuous, closed injective operator as shown in Theorem~\ref{th:one_to_one}. Then, there exists an inverse operator of ${\bf{R}}\left({\bf{p}}\right)$, denoted by ${\bf{R}}^{-1}\left({\bf{r}}\right): {\cal{D}}_{\bf{r}} \mapsto {\cal{D}}_{\bf{p}}$ such that the sensors' received powers ${\bf{p}}$ is determined given their transmission rates ${\bf{r}}$.

\begin{theorem}
\label{th:one_to_one}
There exists an inverse operator of ${\bf{R}}\left({\bf{p}}\right)$, i.e., ${\bf{R}}^{-1}\left({\bf{r}}\right): {\cal{D}}_{\bf{r}} \mapsto {\cal{D}}_{\bf{p}}$, such that ${\bf{p}} = {\bf{R}}^{-1}\left({\bf{r}}\right)$.
\end{theorem}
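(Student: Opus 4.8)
The plan is to prove the statement by establishing that $\mathbf{R}$ is injective on $\mathcal{D}_{\mathbf{p}}$ and, in fact, by constructing $\mathbf{R}^{-1}$ in closed form; since $\mathcal{D}_{\mathbf{r}}$ is defined as the image $\mathbf{R}(\mathcal{D}_{\mathbf{p}})$, surjectivity is automatic, so injectivity is the only property that must be verified. First I would linearize the signal-to-interference-plus-noise ratio by introducing, for each $i\in\mathcal{N}$, the effective received signal power $x_i = g_i(p_i-c_i)/(d_i)^{\alpha_i}$. Because $g_i/(d_i)^{\alpha_i}>0$, the map $p_i\mapsto x_i$ is an affine bijection, so recovering $\mathbf{p}$ uniquely from $\mathbf{r}$ is equivalent to recovering $\mathbf{x}=[x_i]_{i\in\mathcal{N}}$ uniquely.

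Next I would rewrite~(\ref{eq:transmission_rate}) in terms of the aggregate $S=\sum_{j\in\mathcal{N}} x_j$. Writing the denominator as $S-x_i+\sigma^2$, the rate collapses to $r_i = b_i\log_2\big((S+\sigma^2)/(S-x_i+\sigma^2)\big)$, which inverts to $x_i = (S+\sigma^2)\big(1-2^{-r_i/b_i}\big)$. The crucial observation is that this expresses every $x_i$ through the single scalar $S$ together with the given $r_i$. Summing the identity over all $i$ then yields $S = (S+\sigma^2)\,K$ with $K=\sum_{i\in\mathcal{N}}\big(1-2^{-r_i/b_i}\big)$, a single linear equation whose unique solution is $S = \sigma^2 K/(1-K)$. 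Back-substituting this value recovers each $x_i$, and hence each $p_i$, uniquely as an explicit continuous function of $\mathbf{r}$; this composition is precisely the operator $\mathbf{R}^{-1}$.

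The main obstacle will be the well-definedness of this construction rather than the algebra: I must certify that $S$ is a legitimate (finite, nonnegative) aggregate power, which requires $0\le K<1$, i.e. $\sum_{i\in\mathcal{N}}(1-2^{-r_i/b_i})<1$. I would argue that this feasibility condition holds for every $\mathbf{r}\in\mathcal{D}_{\mathbf{r}}$ precisely because $\mathcal{D}_{\mathbf{r}}$ is the image of the admissible power domain: any achievable rate vector arises from some $\mathbf{x}\ge 0$ with finite sum $S$, for which the identity above gives $K = S/(S+\sigma^2)<1$ automatically. Consequently $1-K>0$, the solution for $S$ is unique and nonnegative, the recovered $x_i$ are nonnegative, and the corresponding $p_i$ lie in $\mathcal{D}_{p_i}$. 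Since $\mathbf{R}^{-1}$ is obtained as a composition of continuous elementary functions of $\mathbf{r}$, continuity of the inverse follows at once, completing the proof that $\mathbf{R}^{-1}$ exists and satisfies $\mathbf{p}=\mathbf{R}^{-1}(\mathbf{r})$.
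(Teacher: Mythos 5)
Your proof is correct, but it takes a genuinely different and more constructive route than the paper. The paper argues purely by contradiction: it supposes two distinct power vectors $\mathbf{p}\neq\mathbf{p}'$ yield the same rate vector, manipulates the resulting SINR equalities to show that every $\beta_l(p'_l)$ must equal $\frac{\beta_i(p'_i)}{\beta_i(p_i)}\beta_l(p_l)$, and then sums over $l$ to contradict an earlier identity; this establishes injectivity abstractly and then invokes continuity and closedness of $\mathbf{R}$ to assert the existence of the inverse. You instead exploit the same structural observation--that after the substitution $x_i=g_i(p_i-c_i)/(d_i)^{\alpha_i}$ each rate depends on $\mathbf{x}$ only through $x_i$ and the aggregate $S=\sum_j x_j$--to solve for the inverse in closed form: $x_i=(S+\sigma^2)\bigl(1-2^{-r_i/b_i}\bigr)$ with $S=\sigma^2 K/(1-K)$ and $K=\sum_i\bigl(1-2^{-r_i/b_i}\bigr)$. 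Your feasibility check ($K=S/(S+\sigma^2)<1$ on the image of the admissible domain, with $\sigma^2>0$ ruling out $K=1$) is exactly the point that needs care, and you handle it properly. What your approach buys is considerable: uniqueness, existence, and continuity of $\mathbf{R}^{-1}$ all drop out of one explicit formula, and that formula is precisely what one would need to actually evaluate $P^{\rm{t}}\bigl(R_i^{-1}(\mathbf{r}),d^{\rm{t}}_i\bigr)$ in the reformulated utility~(\ref{eq:sensor_i_r}) and to verify the convexity claim used in Theorem~\ref{th:exist}; the paper's contradiction argument yields none of this explicitly. The one caveat common to both proofs is the implicit assumption $p_i\ge c_i$ (equivalently $x_i\ge 0$) so that the logarithm in~(\ref{eq:transmission_rate}) is well defined, which you at least state as a hypothesis rather than leaving silent.
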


\begin{proof}
Let $\gamma_i\left(r_i\right)={e^{\frac{{{r_i}\ln 2}}{{{b_i}}}}} - 1$, $\beta_i\left(p_i\right)={{{g_i}\frac{{{p_i} - {c_i}}}{{{{\left( {{d_i}} \right)}^{{\alpha _i}}}}}}}$, and ${\bm{\beta}}\left({\bm{p}}\right)=\left[\beta_i\left(p_i\right)\right]^{\top}_{i \in {\cal{N}}}$. According to~(\ref{eq:transmission_rate}), we have
\begin{equation}
\gamma_i\left(r_i\right) = \Lambda_i\left({\bm{\beta}}\left({\bm{p}}\right)\right)= \frac{\beta_i\left(p_i\right)}{{\sum\limits_{j \ne i} \beta_j\left(p_j\right) + {\sigma ^2}}}.
\end{equation}
Since $\forall i\! \in\! {\cal{N}}$ both $\gamma_i\left(r_i\right)$ and $\beta_i\left(p_i\right)$ are continuous, closed injective operators, the injective properties of  ${\bf{R}}\left({\bf{p}}\right)$ can be ensured iff ${\bm{\Lambda}}\left({\bm{\beta}}\left({\bm{p}}\right)\right)=\left[ \Lambda_i\left({\bm{\beta}}\left({\bm{p}}\right)\right) \right]_{i \in {\cal{N}}}$ is injective.

We prove by contradiction that ${\bm{\Lambda}}\left({\bm{\beta}}\left({\bm{p}}\right)\right)$ is injective. Assume that ${\bm{\Lambda}}\left({\bm{\beta}}\left({\bm{p}}\right)\right)$ is not injective. Then, there exist $ {\bf{p}}'$, ${\bf{p}} \!\in\! {\cal{D}}_{\bf{p}}$ and ${\bf{p}}' \!\ne\! {\bf{p}}$ such that ${\bf{r}}' \!=\! {\bf{r}}$. Without loss of generality, we assume $p'_i \!>\! p_i$. To ensure $r'_i=r_i$, ${\bf{p}}'$ should satisfy
\begin{equation}
\frac{\beta_i\left(p'_i\right)}{{\sum\limits_{j \ne i} \beta_j\left(p'_j\right) + {\sigma ^2}}} = \frac{\beta_i\left(p_i\right)}{{\sum\limits_{j \ne i} \beta_j\left(p_j\right) + {\sigma ^2}}},
\end{equation}
and hence
\begin{equation}\label{eq:contradiction}
\sum\limits_{j \ne i} {{\beta _j}\left( {{{p'}_j}} \right)}  = \frac{{{\beta _i}\left( {{{p'}_i}} \right)}}{{{\beta _i}\left( {{p_i}} \right)}}\sum\limits_{j \ne i} {{\beta _j}\left( {{p_j}} \right)}  + \left[ {\frac{{{\beta _i}\left( {{{p'}_i}} \right)}}{{{\beta _i}\left( {{p_i}} \right)}} - 1} \right]{\sigma ^2}.
\end{equation}
By~(\ref{eq:contradiction}), for $\forall l \!\in\! {\cal{N}}$, we have the following equality:
\begin{equation}\label{eq:contradiction_quote}
\begin{array}{ll}
\frac{{{\beta _l}\left( {{{p'}_l}} \right)}}{{\sum\limits_{j \ne l} {{\beta _j}\left( {{{p'}_j}} \right)}  + {\sigma ^2}}} = \frac{{{\beta _l}\left( {{{p'}_l}} \right)}}{{{\beta _i}\left( {{{p'}_i}} \right) + \sum\limits_{j \ne i} {{\beta _j}\left( {{{p'}_j}} \right)}  - {\beta _l}\left( {{{p'}_l}} \right) + {\sigma ^2}}} \\
\mathop=\limits^{(\ref{eq:contradiction})}\frac{{{\beta _l}\left( {{{p'}_l}} \right)}}{{{\beta _i}\left( {{{p'}_i}} \right) + \frac{{{\beta _i}\left( {{{p'}_i}} \right)}}{{{\beta _i}\left( {{p_i}} \right)}}\sum\limits_{j \ne i} {{\beta _j}\left( {{p_j}} \right)}  + \left[ {\frac{{{\beta _i}\left( {{{p'}_i}} \right)}}{{{\beta _i}\left( {{p_i}} \right)}} - 1} \right]{\sigma ^2}  - {\beta _l}\left( {{{p'}_l}} \right) + {\sigma ^2}}} \\
=\frac{{{\beta _l}\left( {{{p'}_l}} \right)}}{{ \frac{{{\beta _i}\left( {{{p'}_i}} \right)}}{{{\beta _i}\left( {{p_i}} \right)}}\left[\sum\limits_{j \in {\cal{N}}} {{\beta _j}\left( {{p_j}} \right)}  +  {\sigma ^2} \right] - {\beta _l}\left( {{{p'}_l}} \right) }}.
\end{array}
\end{equation}

With ${\bf{r}}'\!=\!{\bf{r}}$, we have $r'_l\!=\!r_l$, $\forall l \!\in\! {\cal{N}}$. Then, according to~(\ref{eq:contradiction_quote}), we have the following equality condition:
\begin{equation}\label{eq:intermedaite_equality}
\begin{array}{ll}
\frac{{{\beta _l}\left( {{p_l}} \right)}}{{\sum\limits_{j \ne l} {{\beta _j}\left( {{p_j}} \right)}  + {\sigma ^2}}} \!=\!\frac{{{\beta _l}\left( {{{p'}_l}} \right)}}{{\sum\limits_{j \ne l} {{\beta _j}\left( {{{p'}_j}} \right)}  + {\sigma ^2}}}\!=\!\frac{{{\beta _l}\left( {{{p'}_l}} \right)}}{{ \frac{{{\beta _i}\left( {{{p'}_i}} \right)}}{{{\beta _i}\left( {{p_i}} \right)}}\left[\sum\limits_{j \in {\cal{N}}} {{\beta _j}\left( {{p_j}} \right)}  +  {\sigma ^2} \right] - {\beta _l}\left( {{{p'}_l}} \right) }}  \\
\Leftrightarrow  {\beta _l}\left( {{{p'}_l}} \right) = \frac{{{\beta _i}\left( {{{p'}_i}} \right)}}{{{\beta _i}\left( {{p_i}} \right)}}{\beta _l}\left( {{p_l}} \right).
\end{array}
\end{equation}

\begin{figure*}
\begin{equation}\label{eq:sensor_i_r}
{u_i}\left( {{r_i},{{\bf{r}}_{ - i}}} \right) = \lambda_i {r_i} - \phi P^{\rm{t}}\left(R_i^{ - 1}\left( {\bf{r}} \right), d^{\rm{t}}_i\right)  - \frac{{{r_i}}}{{\sum\limits_{j \in {\cal{N}}} {{r_j}} }}\left[ {a{{\left( {m\sum\limits_{j \in {\cal{N}}} {{r_j}} } \right)}^2} + b\left( {m\sum\limits_{j \in {\cal{N}}} {{r_j}} } \right) + c} \right].
\end{equation}\vspace{-4mm}
\end{figure*}

Summing ${\beta _l}\left( {{{p'}_l}} \right)$ with respect to $l$ over ${\cal{N}}$ leads to $\sum\limits_{l\in {\cal{N}}} {\beta _l}\left( {{{p'}_l}} \right) = \sum\limits_{l\in {\cal{N}}}\frac{{{\beta _i}\left( {{{p'}_i}} \right)}}{{{\beta _i}\left( {{p_i}} \right)}}{\beta _l}\left( {{p_l}} \right)$.
This contradicts with the result derived from~(\ref{eq:contradiction}), i.e., ${\beta _i}\left( {{{p'}_i}} \right) + \sum\limits_{j \ne i} {{\beta _j}\left( {{{p'}_j}} \right)} = \sum\limits_{j\in {\cal{N}}} {{\beta _j}\left( {{{p'}_j}} \right)}  = \frac{{{\beta _i}\left( {{{p'}_i}} \right)}}{{{\beta _i}\left( {{p_i}} \right)}}\sum\limits_{j \in{\cal{N}}} {{\beta _j}\left( {{p_j}} \right)}  + \left[ {\frac{{{\beta _i}\left( {{{p'}_i}} \right)}}{{{\beta _i}\left( {{p_i}} \right)}} - 1} \right]{\sigma ^2}$.
Therefore, the assumption that ${\bm{\Lambda}}\left({\bm{\beta}}\left({\bm{p}}\right)\right)$ is not injective cannot be true, and ${\bm{\Lambda}}\left({\bm{\beta}}\left({\bm{p}}\right)\right)$ is injective. With the injective property of $\gamma_i\left(r_i\right)$ and $\beta_i\left(p_i\right)$, $\forall i \in {\cal{N}}$, ${\bf{R}}\left({\bf{p}}\right)$ is an injective operator. Since ${\bf{R}}\left({\bf{p}}\right)$ is a continuous, closed operator, its inverse operator, i.e., ${\bf{R}}^{-1}\left({\bf{r}}\right)$, exists and the proof is completed.
\end{proof}

By Theorem~\ref{th:one_to_one}, the utility of sensor $i$ in the form of (\ref{eq:sensor_i}) can be rewritten as in (\ref{eq:sensor_i_r}). It is now possible for each sensor to optimize the individual utility by deciding on its transmission rate instead of the received powers from the RF-energy beacons. Then, we can obtain the following condition for the existence of the NE in game ${\cal{G}}_{\rm{s}}$:

\begin{theorem}\label{th:exist}
The NE to the noncooperative game ${\cal{G}}_{\rm{s}}$ exists if $am^2-c\ge0$ and ${\sum\limits_{j \in {\cal{N}}} {{r_j}} } \ge 1$.
\end{theorem}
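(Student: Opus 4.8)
The plan is to recognize $\mathcal{G}_{\rm s}$, once recast in the rate variables via Theorem~\ref{th:one_to_one}, as a \emph{concave game} and to invoke the classical existence result for such games (Rosen / Debreu--Glicksberg--Fan): a pure-strategy Nash equilibrium exists provided that (i) each player's strategy set is nonempty, compact and convex, (ii) every payoff $u_i$ is jointly continuous, and (iii) every $u_i$ is concave in the player's own variable $r_i$ for each fixed $\mathbf{r}_{-i}$. Conditions (i) and (ii) are the routine part: the admissible power box $\mathcal{D}_{p_i}=[0,p_i^{\rm u}]$ is compact and convex, Theorem~\ref{th:one_to_one} guarantees that $R_i^{-1}$ is a well-defined continuous map so that, for any fixed $\mathbf{r}_{-i}$, the feasible rate interval of player $i$ is a compact, convex set, and the payoff~(\ref{eq:sensor_i_r}) is continuous on it. Hence the whole argument hinges on establishing (iii).

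To verify concavity I would differentiate~(\ref{eq:sensor_i_r}) twice in $r_i$, holding $\mathbf{r}_{-i}$ (equivalently $\mathbf{p}_{-i}$) fixed, and treat its three terms separately. The revenue term $\lambda_i r_i$ is linear and contributes nothing to the second derivative. For the power-price term, inverting~(\ref{eq:transmission_rate}) at fixed interference gives $R_i^{-1}(\mathbf{r})=c_i+\frac{(d_i)^{\alpha_i}}{g_i}\bigl(2^{r_i/b_i}-1\bigr)\bigl(\sum_{j\ne i}\beta_j(p_j)+\sigma^2\bigr)$, an exponential and hence convex function of $r_i$; since $P^{\rm t}$ is linear increasing in its first argument, $-\phi P^{\rm t}(R_i^{-1}(\mathbf{r}),d_i^{\rm t})$ is concave in $r_i$ and only helps. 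The decisive term is the transaction fee. Writing $S=\sum_{j\in\mathcal{N}}r_j$ and $S_{-i}=\sum_{j\ne i}r_j$, this term equals $-\bigl(am^2 r_i S+bm\,r_i+c\,r_i/S\bigr)$, and a direct computation gives a second derivative in $r_i$ of $-2am^2+2cS_{-i}/S^3$.

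Collecting the contributions, $\partial^2 u_i/\partial r_i^2\le -2am^2+2cS_{-i}/S^3$, so concavity follows once $am^2\ge cS_{-i}/S^3$. This is exactly where the two hypotheses enter: since $S_{-i}\le S$ one has $cS_{-i}/S^3\le c/S^2$, and the condition $S=\sum_{j\in\mathcal{N}}r_j\ge 1$ then yields $c/S^2\le c$, while $am^2-c\ge 0$ gives $c\le am^2$; chaining these inequalities delivers $cS_{-i}/S^3\le am^2$, hence $\partial^2 u_i/\partial r_i^2\le 0$ and $u_i$ is concave in $r_i$. With (i)--(iii) in hand, the concave-game existence theorem yields the Nash equilibrium. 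I expect the main obstacle to be the fee term: one must carry out the second-derivative computation carefully (the $c\,r_i/S$ piece is the only genuinely nonconvex contribution) and then identify the precise inequality chain that the stated conditions $am^2-c\ge 0$ and $S\ge 1$ are designed to close; the power-price and revenue terms, by contrast, require only a convexity observation.
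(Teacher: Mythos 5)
Your proposal is correct and follows essentially the same route as the paper: recast the game in the rate variables, compute $\partial^2 u_i/\partial r_i^2$ term by term to get the concavity condition $-2am^2+2c\sum_{j\ne i}r_j/(\sum_{j\in\mathcal{N}}r_j)^3\le 0$ plus the concave power-price contribution, and then invoke the standard existence theorem for concave games. Your explicit inequality chain $cS_{-i}/S^3\le c/S^2\le c\le am^2$ is exactly the step the paper leaves implicit when it asserts that the last two terms of (\ref{eq:sensor_i_r_second_derivative}) are nonpositive under the stated hypotheses.
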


\begin{proof}
By~(\ref{eq:sensor_i_r}), ${u_i}\left( {{r_i},{{\bf{r}}_{ - i}}} \right)$ is continuous and differentiable on $r_i$, $\forall i \in {\cal{N}}$. Now, we examine the second derivatives of ${u_i}\left( {{r_i},{{\bf{r}}_{ - i}}} \right)$ with respect to $r_i$ as shown in~(\ref{eq:sensor_i_r_second_derivative}), $\forall i \in {\cal{N}}$:
\begin{equation}\label{eq:sensor_i_r_second_derivative}
\begin{array}{ll}
\frac{\partial^2{u_i}\left( {{r_i},{{\bf{r}}_{ - i}}} \right)}{\partial {r_i}^2}\\
\!=\! - \phi \frac{{{\partial ^2 P^{\rm{t}}\left(R_i^{ - 1}\left( {\bf{r}} \right), d^{\rm{t}}_i\right) }}}{{\partial {r_i}^2}} \!-\! 2am^2+{\raise0.7ex\hbox{${2c\sum\limits_{j \ne i} {{r_j}} }$} \!\mathord{\left/
 {\vphantom {{2c\sum\limits_{j \ne i} {{r_j}} } {{{\left( {\sum\limits_{j \in {\cal{N}}} {{r_j}} } \right)}^3}}}}\right.}
\!\lower0.7ex\hbox{${{{\left( {\sum\limits_{j \in {\cal{N}}} {{r_j}} } \right)}^3}}$}}.
\end{array}
\end{equation}
By Theorem~\ref{th:exist}, the sum of the last two terms in~(\ref{eq:sensor_i_r_second_derivative}), i.e., $- 2am^2+{\raise0.7ex\hbox{${2c\sum\limits_{j \ne i} {{r_j}} }$} \!\mathord{\left/
 {\vphantom {{2c\sum\limits_{j \ne i} {{r_j}} } {{{\left( {\sum\limits_{j \in {\cal{N}}} {{r_j}} } \right)}^3}}}}\right.\kern-\nulldelimiterspace}
\!\lower0.7ex\hbox{${{{\left( {\sum\limits_{j \in {\cal{N}}} {{r_j}} } \right)}^3}}$}}$, is smaller than $0$. Moreover, since $r_i = R_i\left(p_i, {\bf{p}}_{-i}\right)$ defined in~(\ref{eq:transmission_rate}) is concave with respect to $p_i$, its inverse operator, i.e., $p_i=R^{-1}_i\left({\bf{r}}\right)$, is correspondingly convex with respect to $r_i$ according to the injective property. Since ${{ P^{\rm{t}}\left(R_i^{ - 1}\left( {\bf{r}} \right), d^{\rm{t}}_i\right) }}$ is a linear function of $R_i^{ - 1}\left( {\bf{r}} \right)$, $- \phi \frac{{{\partial ^2 P^{\rm{t}}\left(R_i^{ - 1}\left( {\bf{r}} \right), d^{\rm{t}}_i\right) }}}{{\partial {r_i}^2}}$ is smaller than $0$, and $\frac{\partial^2}{\partial {r_i}^2}{u_i}\left( {{r_i},{{\bf{r}}_{ - i}}} \right)$ is therefore smaller than $0$. According to Theorem~1 in~\cite{peng2009summary}, the solution to the noncooperative game ${\cal{G}}_{\rm{s}}$ exists, and the proof is completed.
\end{proof}

It is well-known that following the concavity condition in Theorem~\ref{th:exist}, the continuous better-reply in the form of simultaneous gradient ascent admits an equilibrium point (see Theorem 7 in~\cite{peng2009summary}). Due to the space limit, we omit the presentation of the equilibrium searching algorithm and the discussion on the globally asymptotic stability of the equilibrium. Interested readers are referred to~\cite{peng2009summary} for more details.

\section{Performance Evaluation}
\label{sec:performance}

In this section, we present numerical studies to evaluate the performance of the IoT crowdsensing system. For ease of illustration, we consider $10$ sensors, i.e, $\left|{\cal{N}}\right| \!=\! 10$, working as a sensing cloud. The bandwidth of a sensor is $b_i\!=\!2$, $\forall i \!\in\! {\cal{N}}$, and the noise $\sigma$ is $1$. The vector of the distances between the sensors and the access point connected to the blockchain, i.e., ${\bf{d}}\!=\!\left[d_i\right]_{i \in {\cal{N}}}$, is set to be $\left[0.25, 0.2, 0.15, 0.25, 0.2, 0.15, 0.25, 0.2, 0.15, 0.25\right]$, and ${\bm{\alpha}}\!=\!\left[\alpha_i\right]_{i \in {\cal{N}}}\!=\!\left[3.5,3,2.5,3.5,3,2.5,3.5,3,2.5,3.5\right]$. The channel gain ${\bf{g}}=\left[g_i\right]_{i \in {\cal{N}}}$ is set to be $[1.95,2,2.18,1.95,2,2.18,1.95,2,2.18,1.95]$. The price of unit received power $\phi$ is set to be $0.01$, and the circuit and sensing power vector for the sensors
${\bf{c}}= \left[c_i\right]_{i \in {\cal{N}}}$, is $\left[1, 2, 3, 1, 2, 3, 1, 2, 3, 1\right]$. The coefficients in the blockchain power consumption model are $a=0.1$, $b=0.1$, $c=0.1$ and the computational power coefficient is $m=3$. The distance between the sensors and RF-energy beacons ${\bf{d}}^{\rm{t}}=\left[d^{\rm{t}}_i\right]_{i \in {\cal{N}}}$ is $\left[1,2,3,1,2,3, 1, 2, 3, 1\right]\times10^{-3}+1$ and $\eta=2$. The price of unit bitrate of sensors is $\lambda_i=20$, $\forall i \in {\cal{N}}$.

\subsection{Numerical Result}

\begin{figure}[!]
\centering
\begin{minipage}{4.3cm}
\centering
\includegraphics[width=1.04\textwidth,trim=5 5 30 10,clip]{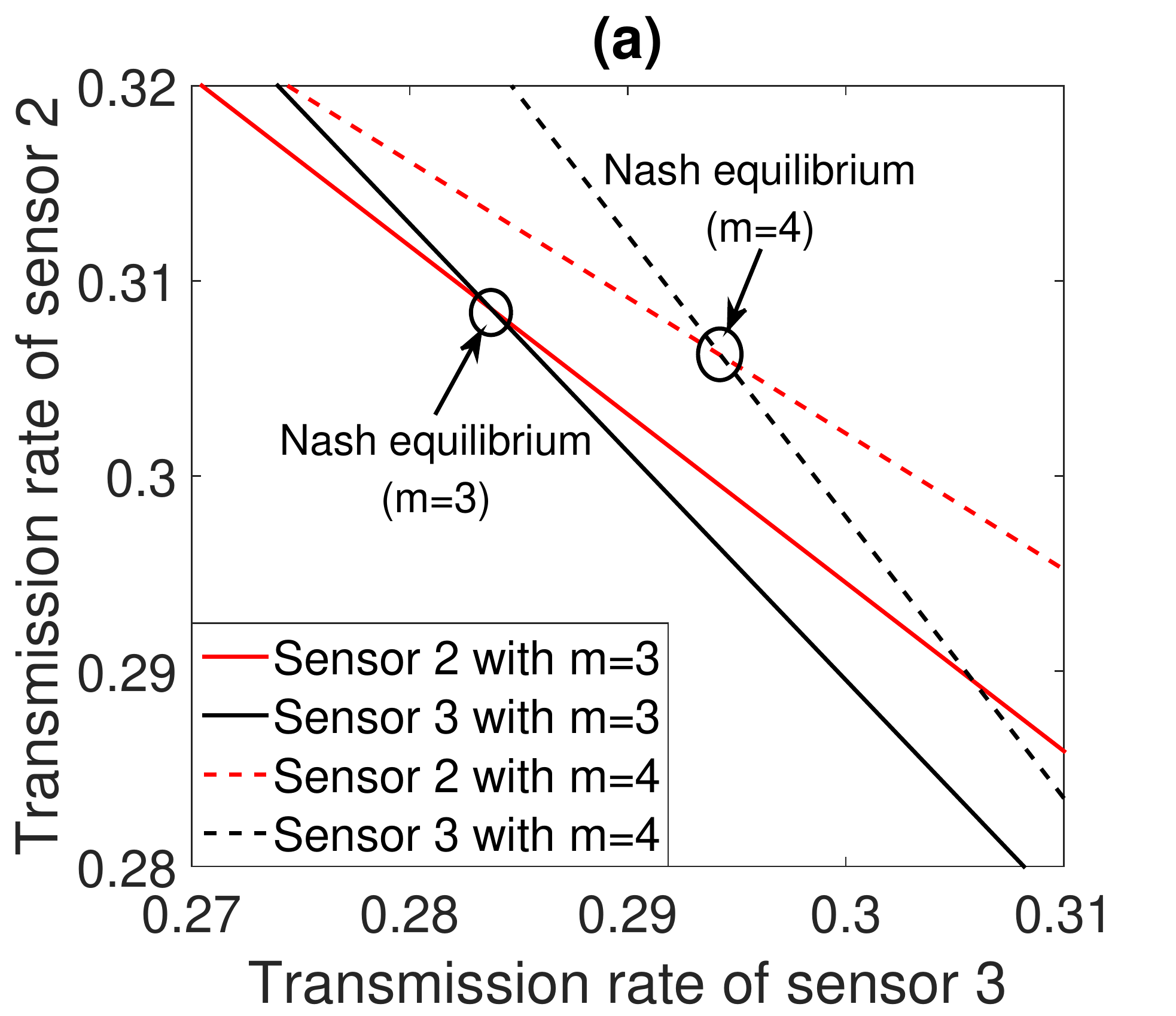}
\end{minipage}
\begin{minipage}{4.3cm}
\centering
\includegraphics[width=1.04\textwidth,trim=5 5 35 0,clip]{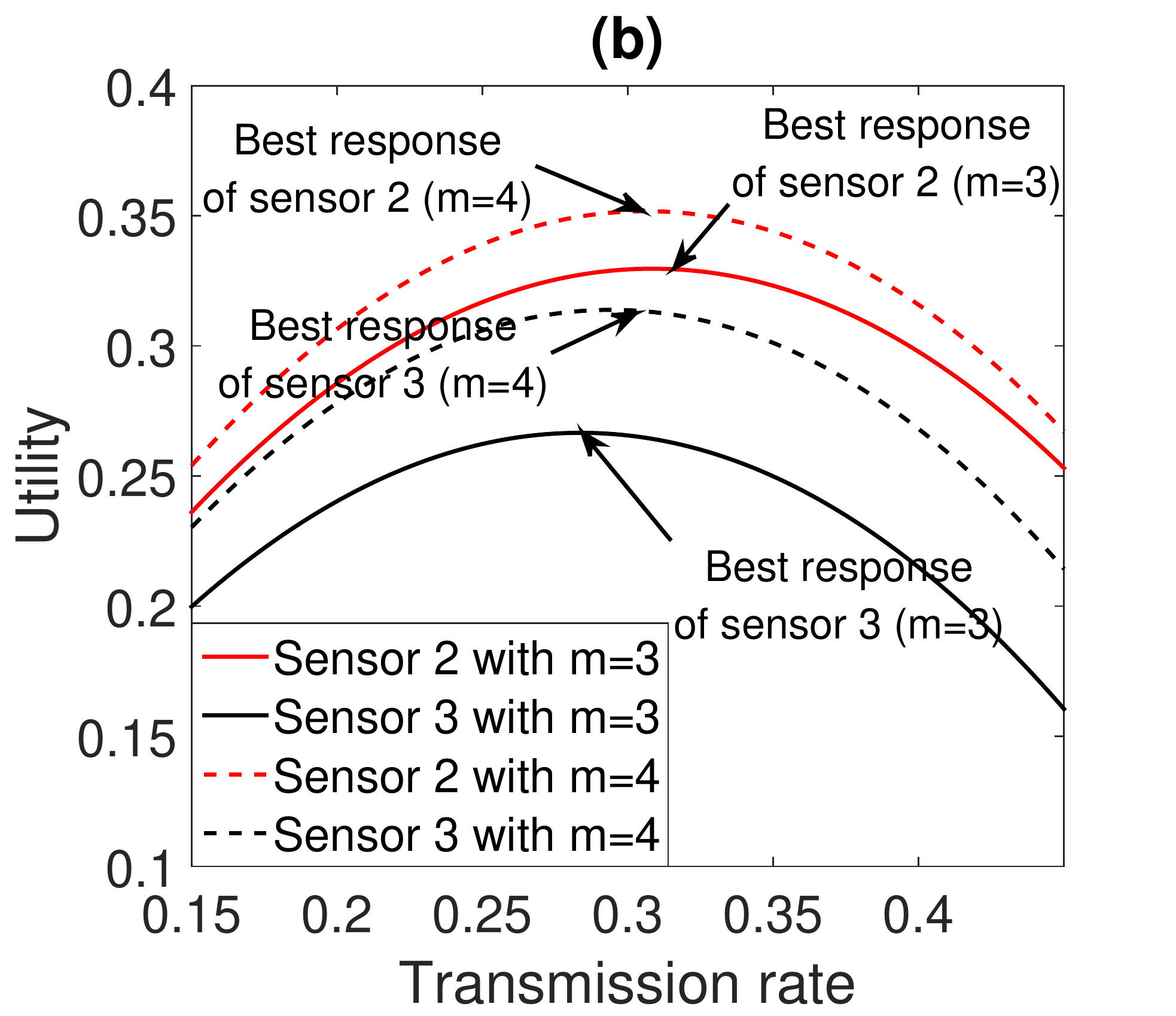}
\end{minipage}
\caption{(a) Nash equilibrium and (b) best response.\vspace{-4mm}}
\label{fig:ne_br}
\end{figure}

Figures~\ref{fig:ne_br}(a) and~(b) show the NE and best responses, respectively. Figure~\ref{fig:ne_br}(a) illustrates the NE of the transmission rates for sensors $2$ and $3$. The NE is the point at which the best responses for sensors $2$ and $3$ intersect. We observe that the transmission rate of sensor $3$ increase as the computational power coefficient $m$ increases. The reason is that when $m$ increases, the computational power increases more quickly, resulting in high energy consumption of the blockchain. Consequently, the transaction fee rises sharply, and hence the less cost-effective transaction fees that sensor $3$ will be charged. In this case, sensor $3$ increase its transmission rate. In contrast, the transmission rate of sensor $2$ decreases as the value of the computational power coefficient $m$ increases. The reason is that the transmission rate of sensor $2$ is higher than that of sensor $3$, which means that the transaction fee of sensor $2$ increases more rapidly than that of sensor $3$. Accordingly, sensor $2$ decreases its transmission rate.

We then evaluate the utilities of sensors $2$ and $3$. In Fig.~\ref{fig:ne_br}(b), the utility of sensor $2$ changes because of the different transmission rates for transferring the data to the blockchain. From Fig.~\ref{fig:ne_br}(b), there is a point where the utility of sensor $2$ is maximized, which is pointed by the arrowhead of ``Best response''. This point indicates the NE for sensor $2$. As is evident from Fig.~\ref{fig:ne_br}(b), this utility, which is a function of the transmission rate, is unimodal, and the optimal solution can be obtained analytically.

\begin{figure}[!]
\centering
\begin{minipage}{4.3cm}
\centering
\includegraphics[width=1.05\textwidth,trim=0 0 50 0,clip]{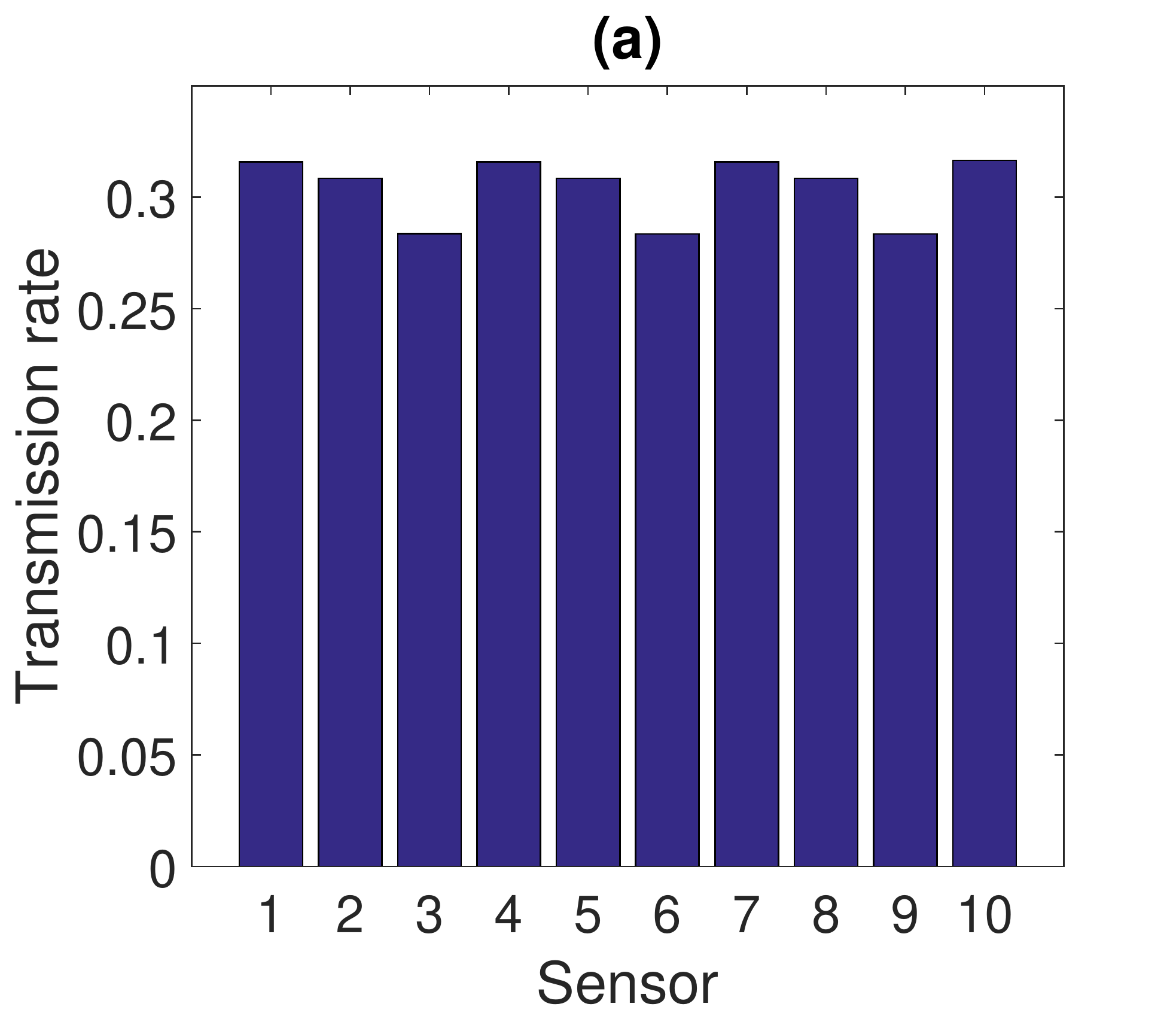}
\end{minipage}
\begin{minipage}{4.3cm}
\centering
\includegraphics[width=0.95\textwidth,trim=100 80 70 0,clip]{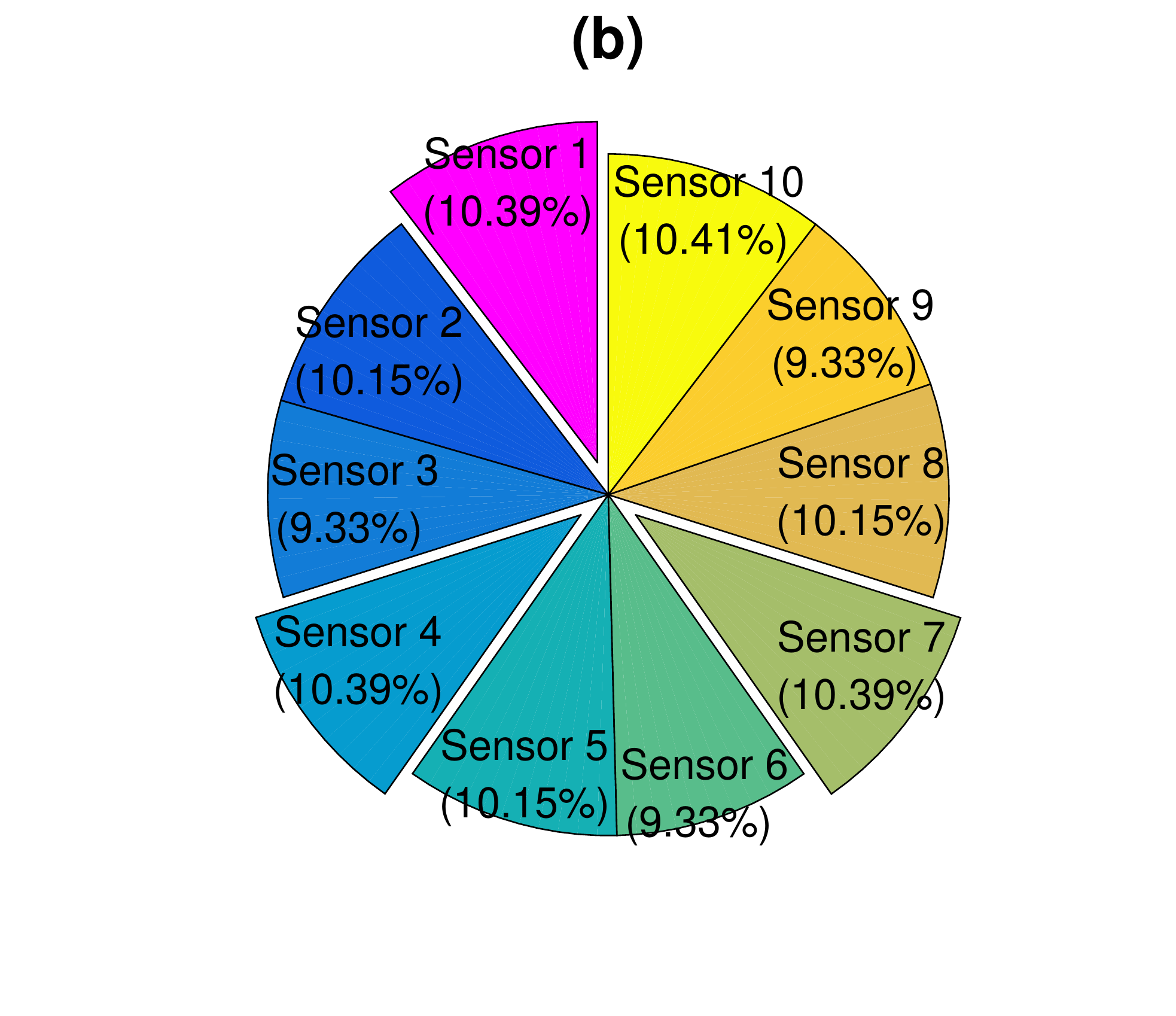}
\end{minipage}
\begin{minipage}{4.3cm}
\centering
\includegraphics[width=1.05\textwidth,trim=0 0 50 0,clip]{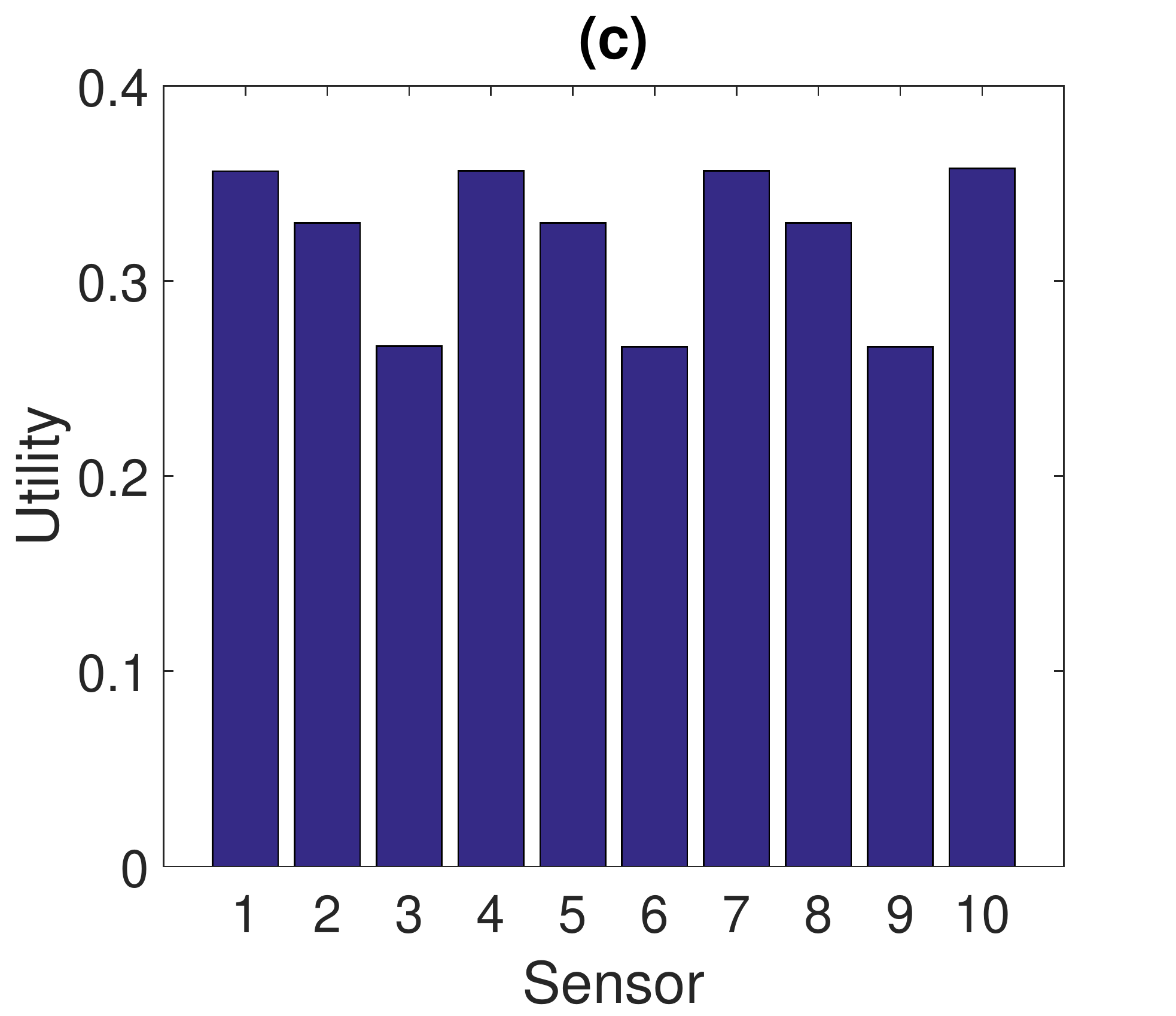}
\end{minipage}
\caption{(a) Transmission rates, (b) ratio of transaction fee, and (c) utilities for each sensor.\vspace{-4mm}}
\label{fig:rate_ratio_utility}
\end{figure}

We next investigate the sensors' states, i.e., the transmission rates, ratio of the total transaction fee, and utilities in Figs.~\ref{fig:rate_ratio_utility}(a), (b), and (c), respectively. As shown in Fig.~\ref{fig:rate_ratio_utility}, the transmission rates of sensors $1$, $4$, $7$, and $10$ are among the highest ones because of their shortest distance to the RF-energy beacons. Specifically for sensors $1$ and $2$, the gap between sensors $1$ and $2$ in the utility is even larger than that in the transmission rate, i.e., $\frac{u_1}{u_2}\approx \frac{0.35}{0.33} > \frac{r_1}{r_2}\approx \frac{0.31}{0.3}$. The reason is that each sensor needs to pay the transaction fee according to its fraction of the sum of all the sensors' transmission rates while the total transaction fee is increasing rapidly, i.e., as a convex function, with respect to the total transmission rates. This means that the less the sensor's fraction of the sum of all the sensors' transmission rates is, the less cost-effective transaction fee that it will be charged. This is consistent with the result shown in Fig.~\ref{fig:ne_br}. 

%

\section{Conclusion}
\label{sec:conclusion}

In this paper, we have presented a noncooperative-game model to analyze the transmission strategy in the self-organized wireless-powered IoT crowdsensing system built upon permissionless blockchains. We have focused on the interactions of the sensors and considered the impact of both the interference and the blockchain maintenance cost on the sensors' utilities. Analytically, we have established a joint model describing the impact of the sensors' transmission strategies on their transmission rate and the needed transaction fee on the blockchain side. We have studied the equilibrium strategies of the sensors in the wireless-powered IoT crowdsensing system by using best response. We have analytically examined the conditions for the solution, i.e., the Nash equilibrium, of the game to exist. Our future work will extend to the study in the impact of user-demands on the sensors' strategies.

\bibliography{bibfile}

\end{document}